\documentclass[10pt]{article}

\usepackage[margin=1in]{geometry}
\usepackage{amsmath,amssymb,amsthm,bm}
\usepackage{booktabs}
\usepackage{graphicx}
\usepackage{float}
\usepackage{placeins}
\usepackage{microtype}
\usepackage[numbers,sort&compress]{natbib}
\usepackage[colorlinks=true,allcolors=blue]{hyperref}
\usepackage[nameinlink,noabbrev]{cleveref}

\newtheorem{theorem}{Theorem}
\newtheorem{lemma}{Lemma}

\newcommand{\Rerr}{R_{\mathrm{err}}}

\newcommand{\diag}{\operatorname{diag}}

\title{Reliability--Contagion Feasibility in LLM Multi-Agent Networks}
\author{
Ruiwu Niu$^{1}$, Xincheng Shu$^{2,3}$, and Ying Zhao$^{4}$\\[0.6em]
\parbox{0.94\textwidth}{\centering\small
$^{1}$Department of Applied Data Science, Hong Kong Shue Yan University, Hong Kong SAR, China\\
$^{2}$Computational Communication Research Center, Beijing Normal University, Zhuhai 519087, China\\
$^{3}$School of Journalism and Communication, Beijing Normal University, Beijing 100875, China\\
$^{4}$Department of Electrical Engineering, City University of Hong Kong, Kowloon, Hong Kong SAR, China\\
Emails: \texttt{rniu@hksyu.edu} (R. Niu); \texttt{yzhao396-c@my.cityu.edu.hk} (Y. Zhao)
}}
\date{Draft of July 23, 2026}

\begin{document}
\maketitle

\begin{abstract}
Communication allows large language model agents to pool evidence, but it also creates paths along which an erroneous claim can spread. We formulate a correction-aware network model that tracks susceptible, exposed, infectious, and corrected agents and derive its early-invasion condition for heterogeneous communication networks. We then couple this propagation model to an analytic majority-vote benchmark in which a clean-task reliability target imposes a minimum connectivity requirement. Under fixed exposure per communication edge, reliability and error control impose opposing graph constraints. We characterize when their intersection is empty and when it contains an intermediate connectivity range, and identify regular graphs that attain the smallest invasion factor within the reliable graph class when such graphs exist. Under a fixed sender budget, the homogeneous first-order threshold is independent of network density, showing that the communication-budget convention determines whether added edges increase early propagation risk. Finite-network simulations on 21,000 trajectories illustrate these directional predictions. A controlled \texttt{grok-4.3} experiment then evaluates three six-node topologies on 36 new closed-world tasks, with a balanced 12-task subset continued to full cascades. Mean first-generation offspring increased from 0.667 to 1.333 and 1.667 as degree increased from 2 to 4 and 5, while the adoption fraction among exposed neighbours remained 0.333. Mean non-seed erroneous adoption in the full-cascade subset was 0.200, 0.333, and 0.333. Together, these results provide a tractable basis for selecting connectivity under explicit reliability and propagation constraints.
\end{abstract}

\section{Introduction}

Large language models are increasingly organized as multi-agent systems in which several model instances propose answers, exchange intermediate reasoning, critique one another, and aggregate a final decision. Communication can improve reasoning and factuality by exposing an agent to evidence or solution paths that it did not produce independently \citep{du2024multiagent,liang2024divergent}. The gain depends on the interaction protocol. Multi-agent debate does not consistently outperform voting, ensembling, or self-consistency across tasks and hyperparameter choices \citep{smit2024mad,choi2025debate}. These findings make the communication graph part of the model design.

Communication also couples agent failures. An agent can conform to an incorrect peer or retain an unsupported premise across later rounds \citep{baltaji2024conformity}. Adversarial work shows related propagation mechanisms. Agent Smith demonstrated infectious jailbreak behavior under randomized pairwise interaction among multimodal agents \citep{gu2024agentsmith}, and communication attacks can alter messages in established multi-agent workflows \citep{he2025communication}. These settings differ from ordinary factual errors, yet they establish that an interaction channel can amplify a failure after it enters the network.

Topology has consequently become an explicit design variable. Sparse debate graphs can reduce token cost while retaining task accuracy in the evaluated settings \citep{li2024sparse}. GPTSwarm represents language-agent systems as optimizable computational graphs \citep{zhuge2024gptswarm}; AgentPrune removes redundant or harmful messages \citep{zhang2025agentprune}; and G-Designer learns task-dependent communication graphs \citep{zhang2025gdesigner}. This literature primarily optimizes task accuracy, robustness, or cost for a given evaluation distribution.

The closest empirical antecedents study error propagation directly. Shen et al. \citep{shen2025propagation} use counterfactual interventions to measure how an injected correct or incorrect agent output changes the final system decision across communication topologies. They find that moderately sparse graphs can suppress errors while preserving useful information and introduce a learned topology design. Their contribution is empirical causal analysis and topology optimization; it does not derive an epidemic threshold or a graph-class feasibility boundary. NetSafe reports that topology and connectivity also affect vulnerability under several multi-agent attacks \citep{yu2025netsafe}. These studies establish empirical links among sparsity, information flow, and robustness. The remaining question is analytic: at fixed node count, when does a graph exist that meets both a clean-reliability requirement and a subcritical error-propagation requirement?

Network epidemiology provides a natural language for the propagation side. Next-generation operators define reproduction ratios in heterogeneous compartmental systems \citep{diekmann1990r0,vandendriessche2002reproduction}, and node-level network mean-field models relate invasion conditions to the spectral radius of an interaction matrix \citep{vanmieghem2009virus}. Our earlier SEIHR studies provide part of the methodological background for the present work. A deterministic SEIHR model treated exposed individuals as infectious, represented hospitalization explicitly, incorporated human migration, and derived a basic reproduction number for studying control measures \citep{niu2020seihr}. A stochastic extension described daily fluctuations in infections and hospitalizations and established sufficient conditions for stochastic stability of the disease-free equilibrium \citep{niu2021stochasticseihr}. The present SEICS formulation retains interpretable compartments and threshold analysis while assigning states to observable claim-handling events on an explicit agent communication network.

Rumor models have also adapted epidemic tools to information diffusion on heterogeneous networks \citep{nekovee2007rumour}. They include memory loss, denial, and correction. The SIHR model represents forgetting and remembering \citep{zhao2012sihr}. Liu et al. \citep{liu2019cyberrumor} give a population-level four-compartment model in which network structure is compressed into an effective contact parameter. Their rumor-neutral, rumor-received, rumor-believed, and rumor-denied populations include correction and forgetting, and the model yields a next-generation threshold. These results are mathematical precedents for a correction-aware claim process.

The LLM setting adds two operational requirements. First, a state must be assigned from message traces rather than from a latent belief. We therefore follow one pre-specified claim and record its reception, use, retransmission, and explicit correction. Second, topology changes communication volume unless the budget convention is fixed. With constant exposure per edge, adding links increases a sender's total influence. With a fixed sender budget, the same exposure is divided among more outgoing links. Contact-based epidemic models already show that reactive and degree-normalized contact processes have different spectral thresholds \citep{gomez2010contact}. This distinction is equally important in agent systems.

Recent preprints have proposed mathematical descriptions of multi-agent error propagation. Collective Hallucination studies dynamic network effects and defenses for unsupported claims \citep{jamshidi2026collective}. Contagion Networks models evaluator-preference transmission with an empirically estimated cross-agent contagion matrix and spectral propagation regimes \citep{liu2026contagion}. Its propagated variable and operator differ from the claim-state SEICS process considered here, and it does not impose a clean-reliability degree constraint.

This paper develops a compact theory around three results. First, a heterogeneous claim-level SEICS model yields a next-generation operator that separates exposure, adoption, pre-transmission rejection, reactive correction, and correction retention. Second, an independent-signal majority benchmark converts a clean-reliability target into a minimum-degree condition. Combining it with the spectral invasion condition gives graph-class existence and impossibility results and defines an explicit feasibility problem for LLM-agent communication. Third, a fixed-sender analysis shows exactly where the density effect disappears from the homogeneous spectral threshold. The normalized-threshold identity has a classical contact-process precedent; its role here is to expose the communication-budget assumption behind the LLM topology trade-off.

The empirical scope is deliberately small. We use finite-network Gillespie simulations to check implementation consistency and the direction of cascade risk under the stated process. A controlled real-model network experiment then measures first-generation propagation on 36 new tasks and full-cascade reach on a balanced 12-task subset along a nested edge-addition path. The analytic reliability benchmark and the propagation experiments remain separate pieces of evidence.

\section{Claim-level network SEICS model}
\label{sec:model}

Let \(G=(V,E)\) contain \(N\) agents. We use the convention \(A_{ij}=1\) when agent \(j\) can send to agent \(i\). The analysis follows one externally checkable false claim. Each node occupies one observable state:

\begin{itemize}
    \item \(S\): the claim is absent from the active trace and no retained correction is present;
    \item \(E\): the claim has been received, but the agent has not asserted or retransmitted it;
    \item \(I\): the agent asserts, uses, or retransmits the claim as true;
    \item \(C\): the agent explicitly rejects the claim while correction evidence remains available.
\end{itemize}

\Cref{fig:seics} summarizes the observable transition pipeline. Incoming messages carrying the claim move a susceptible node from S to E; adoption then moves it to I, and explicit correction moves it to C. Two correction routes bypass retransmission: an exposed node can reject the claim before entering I, and a susceptible node can verify the claim prophylactically. Loss of retained correction returns C to S.

\begin{figure}[H]
    \centering
    \includegraphics[width=\linewidth]{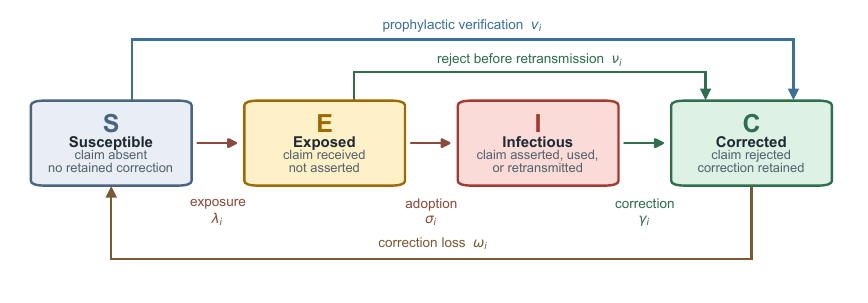}
    \caption{Claim-level SEICS transition pipeline. The main path is S$\rightarrow$E through incoming exposure at rate $\lambda_i=\sum_jT_{ij}x_j$, E$\rightarrow$I through adoption at rate $\sigma_i$, and I$\rightarrow$C through reactive correction at rate $\gamma_i$. An exposed node can reject the claim before retransmission through E$\rightarrow$C at rate $\nu_i$, and prophylactic verification moves S$\rightarrow$C at rate $v_i$. Correction loss returns C$\rightarrow$S at rate $\omega_i$. States are assigned from observable message traces. The diagram represents the deterministic node-level mean-field mechanism in \eqref{eq:seics}; it is not an empirical result.}
    \label{fig:seics}
\end{figure}

Let \(T_{ij}\ge0\) be the instantaneous exposure rate from \(j\) to \(i\). For node probabilities \(s_i,e_i,x_i,c_i\), define \(\lambda_i=\sum_jT_{ij}x_j\). The deterministic quenched mean-field model is
\begin{align}
\dot s_i&=-s_i\lambda_i-v_i s_i+\omega_i c_i,\nonumber\\
\dot e_i&=s_i\lambda_i-(\sigma_i+\nu_i)e_i,\nonumber\\
\dot x_i&=\sigma_i e_i-\gamma_i x_i,\label{eq:seics}\\
\dot c_i&=\nu_i e_i+\gamma_i x_i+v_i s_i-\omega_i c_i.\nonumber
\end{align}
Here \(\sigma_i\) is adoption, \(\nu_i\) is rejection before retransmission, \(\gamma_i\) is reactive correction, \(\omega_i\) is correction loss, and \(v_i\) is prophylactic verification. Equation \eqref{eq:seics} uses the closure \(\Pr(S_i,I_j)\approx s_ix_j\). The exact finite stochastic process has joint node states and is treated separately in \cref{sec:numerical}.

At the claim-free equilibrium,
\[
s_i^0=\frac{\omega_i}{v_i+\omega_i},\qquad
c_i^0=\frac{v_i}{v_i+\omega_i}.
\]
Define
\[
P=\diag\!\left(s_i^0\frac{\sigma_i}{\sigma_i+\nu_i}\right),
\qquad \Gamma=\diag(\gamma_i),
\qquad L=T\Gamma^{-1}.
\]
The matrix \(L\) records expected lifetime exposure during one infectious episode.

\begin{theorem}[Heterogeneous early-invasion condition]
\label{thm:ngm}
Assume \(T\ge0\), \(\sigma_i,\gamma_i>0\), \(\nu_i,v_i,\omega_i\ge0\), and \(v_i+\omega_i>0\). The claim-level next-generation matrix is
\begin{equation}
K=PL=PT\Gamma^{-1},\qquad \Rerr=\rho(K).
\label{eq:ngm}
\end{equation}
On the invariant product of node probability simplices, the claim-free equilibrium of \eqref{eq:seics} is locally asymptotically stable when \(\Rerr<1\) and unstable when \(\Rerr>1\), subject to the usual irreducibility condition.
\end{theorem}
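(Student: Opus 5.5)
The plan is to apply the van den Driessche--Watmough next-generation framework \citep{vandendriessche2002reproduction} to the infected subsystem \((e,x)\). The first step is to reduce to the invariant product of simplices. Since \(s_i+e_i+x_i+c_i\) is conserved and equal to \(1\), we eliminate \(c_i=1-s_i-e_i-x_i\) and work with \((s,e,x)\) on the forward-invariant set \(s_i,e_i,x_i\ge0\), \(s_i+e_i+x_i\le1\). Invariance follows because each vector field component is nonnegative on the corresponding face. The claim-free equilibrium is \(e=x=0\), \(s=s^0\) with \(s_i^0=\omega_i/(v_i+\omega_i)\); the assumption \(v_i+\omega_i>0\) makes it well defined.

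The second step is to linearize at the equilibrium and exhibit block structure. Order the variables as \((e,x;s)\). Since \(\dot e\) and \(\dot x\) vanish identically when \(e=x=0\), the Jacobian is block upper triangular. The susceptible block is
\[
\partial\dot s_i/\partial s_i=-(v_i+\omega_i)<0,
\]
obtained after substituting \(c_i\); it is diagonal and Hurwitz. Local stability is therefore decided by the infected block
\[
J=F-V,\qquad F=\begin{pmatrix}0&\diag(s^0)T\\0&0\end{pmatrix},\qquad V=\begin{pmatrix}\diag(\sigma_i+\nu_i)&0\\-\diag(\sigma_i)&\Gamma\end{pmatrix}.
\]
Here \(F\ge0\) collects new exposures. \(V\) is a nonsingular M-matrix because \(\sigma_i+\nu_i>0\) and \(\gamma_i>0\). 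Its inverse is nonnegative, with lower-left block \(\Gamma^{-1}\diag(\sigma_i/(\sigma_i+\nu_i))\).

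The third step is to compute the next-generation matrix. Only the first block row of \(FV^{-1}\) is nonzero. Its \((1,1)\) block is \(\diag(s^0)T\Gamma^{-1}\diag(\sigma_i/(\sigma_i+\nu_i))\), and its \((1,2)\) block is \(\diag(s^0)T\Gamma^{-1}\). Hence \(\rho(FV^{-1})=\rho(\diag(s^0)T\Gamma^{-1}D)\), where \(D=\diag(\sigma_i/(\sigma_i+\nu_i))\). Using \(\rho(AB)=\rho(BA)\), this equals \(\rho(D\diag(s^0)T\Gamma^{-1})=\rho(PT\Gamma^{-1})=\rho(K)\), since diagonal matrices commute. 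This is the expression for \(\Rerr\) in \eqref{eq:ngm}.

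The final step is to invoke the standard equivalence: for \(F\ge0\) and \(V\) a nonsingular M-matrix, \(s(F-V)<0\iff\rho(FV^{-1})<1\), and \(s(F-V)>0\iff\rho(FV^{-1})>1\) \citep[Lemma~1]{vandendriessche2002reproduction}. Combined with the Hurwitz \(s\)-block, this gives local asymptotic stability when \(\Rerr<1\) by linearization. When \(\Rerr>1\), \(J\) has an eigenvalue with positive real part, so the equilibrium is unstable.

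The main obstacle is the instability direction restricted to the simplex. An unstable eigenvector must point into the feasible region, so that instability is not merely an artifact of the ambient space. Irreducibility is used here. If \(T\) (equivalently \(PT\), assuming \(s_i^0>0\)) is irreducible, then \(F-V\) is an irreducible Metzler matrix on the infected coordinates, with coupling \(e\to x\) through \(\sigma_i>0\) and \(x\to e\) through \(T\). Perron--Frobenius then gives a positive eigenvector for the dominant real eigenvalue, which is an admissible direction, so nearby orbits leave. Nodes with \(s_i^0=0\) (that is, \(\omega_i=0<v_i\)) produce zero rows of \(P\). These are exactly what the ``usual irreducibility condition'' must handle: such nodes are either excluded or the statement is applied to the irreducible block carrying \(\rho(K)\).
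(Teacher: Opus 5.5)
Your proposal is correct and follows the paper's proof: the same $F$, $V$ split of the infected subsystem, the same first block row of $FV^{-1}$, the $\rho(AB)=\rho(BA)$ step with commuting diagonal factors to reach $PT\Gamma^{-1}$, and the van den Driessche--Watmough criterion. Your extra checks (eliminating $c$, the Hurwitz $s$-block, and the Perron--Frobenius direction for instability) make explicit what the paper leaves to that citation; the one slip is cosmetic, since in the ordering $(e,x;s)$ the Jacobian is block lower, not upper, triangular, which does not affect the spectrum argument.
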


\begin{proof}
Let \(S^0=\diag(s_i^0)\), \(\Sigma=\diag(\sigma_i)\), and \(N_\nu=\diag(\nu_i)\). The linearized infected subsystem \((e,x)\) has the standard decomposition
\[
F=\begin{pmatrix}0&S^0T\\0&0\end{pmatrix},\qquad
V=\begin{pmatrix}\Sigma+N_\nu&0\\-\Sigma&\Gamma\end{pmatrix}.
\]
The nonzero next-generation block of \(FV^{-1}\) is
\[
S^0T\Gamma^{-1}\Sigma(\Sigma+N_\nu)^{-1}.
\]
The matrices \(AB\) and \(BA\) have the same nonzero eigenvalues. Since the diagonal factors commute, the spectral radius equals that of
\[
\Sigma(\Sigma+N_\nu)^{-1}S^0T\Gamma^{-1}=PT\Gamma^{-1}.
\]
The local result follows from the next-generation criterion \citep{vandendriessche2002reproduction}. At equality the linearization is non-hyperbolic.
\end{proof}

Theorem \ref{thm:ngm} is exact for the deterministic system \eqref{eq:seics}. In a finite continuous-time Markov chain, extinction remains possible above one and finite cascade probabilities vary smoothly with parameters. We use \(\Rerr\) as a mean-field early-invasion factor.

\section{Clean reliability benchmark}
\label{sec:reliability}

This benchmark is a fully specified analytic test rather than a task dataset or a fitted model-accuracy curve. It holds the network size \(N\) fixed and varies a node's degree \(d\). Each node receives one private signal and one signal from each of its \(d\) neighbours; all \(m=d+1\) signals are independent, have the same probability \(p>1/2\) of being correct, and are combined once by majority vote. A strict majority is selected, and an even tie is broken fairly. The benchmark excludes repeated discussion, source weighting, cross-agent correlation, and attention limits. Its outputs are a degree-to-accuracy curve and the minimum degree needed to reach a target accuracy. In plain terms, it asks how many independent opinions a node needs before a simple majority reaches the required accuracy. Define
\begin{equation}
B_m(p)=\Pr\!\left[\operatorname{Bin}(m,p)>\frac m2\right]
+\frac12\Pr\!\left[\operatorname{Bin}(m,p)=\frac m2\right],
\qquad Q(d;p)=B_{d+1}(p).
\label{eq:reliability}
\end{equation}

\begin{lemma}[Monotonicity]
For \(p\ge1/2\), \(B_m(p)\) is non-decreasing in \(m\). In particular,
\[
B_{2r}(p)=B_{2r-1}(p),
\]
and
\[
B_{2r+1}(p)-B_{2r}(p)
=\left(p-\frac12\right)\binom{2r}{r}[p(1-p)]^r\ge0.
\]
\end{lemma}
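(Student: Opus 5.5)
The plan is to compare the majority-vote accuracy with $m$ and with $m+1$ signals by conditioning on the first $m$ signals. Let $X=\operatorname{Bin}(m,p)$ count the correct votes among the first $m$, and let the extra signal be an independent Bernoulli$(p)$ variable $Y$. Since $B_m(p)$ depends only on the distribution of the count, we can write $B_{m+1}(p)-B_m(p)$ as a sum over the values of $X$. Only the values of $X$ at which adding $Y$ can change the outcome, or move it into or out of a tie, will contribute. We treat the odd-to-even step and the even-to-odd step separately.

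\emph{Step $2r-1\to 2r$.} Take $m=2r-1$, so $X$ is never a tie. If $X\ge r+1$ or $X\le r-2$, the extra vote cannot change the majority. If $X=r$ (a correct majority), then $Y=1$ keeps the correct majority, while $Y=0$ creates a tie, which is worth $1/2$. If $X=r-1$, then $Y=1$ creates a tie worth $1/2$, while $Y=0$ keeps the wrong majority. The change is therefore
\[
\Pr[X=r]\Bigl(-\tfrac{1-p}{2}\Bigr)+\Pr[X=r-1]\,\tfrac{p}{2}.
\]
Here $\Pr[X=r]=\binom{2r-1}{r}p^r(1-p)^{r-1}$ and $\Pr[X=r-1]=\binom{2r-1}{r-1}p^{r-1}(1-p)^r$. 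The two binomial coefficients are equal, so both terms reduce to $\tfrac12\binom{2r-1}{r}p^r(1-p)^r$ with opposite signs. They cancel, which gives $B_{2r}=B_{2r-1}$.

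\emph{Step $2r\to 2r+1$.} Take $m=2r$. The only state the extra vote can affect is the tie $X=r$, which has probability $\binom{2r}{r}[p(1-p)]^r$ and is worth $1/2$ before the extra vote. After the extra vote it becomes a correct majority with probability $p$, so its expected value is $p$. The gain is therefore $\bigl(p-\tfrac12\bigr)\binom{2r}{r}[p(1-p)]^r$. No other value of $X$ changes: for $X\ge r+1$ the count after adding $Y$ is still at least $r+1$, which is a majority of $2r+1$, and for $X\le r-1$ it is at most $r$, still a minority. This difference is nonnegative exactly when $p\ge 1/2$.

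Monotonicity in $m$ then follows by chaining the two steps: the odd-to-even steps are equalities and the even-to-odd steps are nonnegative. The base case $B_1=p\ge B_0$ is either covered by the formula with $r=0$ under the convention $B_0=1/2$, or it simply lies outside the range $m\ge1$. The main obstacle is the bookkeeping in the odd-to-even step: we must make sure that only the boundary states $X=r$ and $X=r-1$ change, and we must get the tie weights right so that the cancellation through $\binom{2r-1}{r}=\binom{2r-1}{r-1}$ comes out exactly. The rest is routine.
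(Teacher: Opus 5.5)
Your proposal is correct and is essentially the paper's argument: condition on the extra Bernoulli signal, isolate the central values of the count ($X=r$, together with $X=r-1$ in the odd-to-even step), and note that all other terms cancel. Your case analysis and the identity $\binom{2r-1}{r}=\binom{2r-1}{r-1}$ supply the details that the paper's one-line proof leaves implicit, and $B_0=1/2$ follows directly from \eqref{eq:reliability} (the count $\operatorname{Bin}(0,p)=0$ is always a tie), so the $r=0$ case holds without any extra convention.
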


\begin{proof}
Condition on the final Bernoulli signal and separate the central binomial event. Terms outside the central event cancel, yielding the two identities.
\end{proof}

For a target \(Q_{\min}\), let
\begin{equation}
k_\star=\min\{d\in\{0,\ldots,N-1\}:Q(d;p)\ge Q_{\min}\}.
\label{eq:kstar}
\end{equation}
When this set is nonempty, every node meets the reliability target exactly when the graph satisfies \(\delta(G)\ge k_\star\). Equation \eqref{eq:reliability} is an analytic surrogate. Correlated agent messages, shared model errors, and attention limits can change the empirical reliability curve. Figure~\ref{fig:frontier}a plots this benchmark for the illustrative parameter choice used below; it is separate from the real-model experiment.

\section{Reliable and subcritical graph classes}
\label{sec:feasibility}

Consider homogeneous effective susceptibility \(q=s^0\sigma/(\sigma+\nu)\) and lifetime exposure \(L=\tau A\), where \(\tau\ge0\) is fixed per edge. Then
\begin{equation}
K=q\tau A,\qquad \Rerr(G)=q\tau\rho(A).
\label{eq:peredge}
\end{equation}

\begin{theorem}[Reliability--contagion feasibility boundary]
\label{thm:feasibility}
Fix \(N\), \(p>1/2\), and a finite \(k_\star\in\{1,\ldots,N-1\}\) from \eqref{eq:kstar}. Let \(q,\tau\ge0\). Over connected, simple, unweighted, undirected graphs on \(N\) nodes, every graph meeting the per-node reliability target satisfies
\begin{equation}
\Rerr(G)\ge q\tau k_\star.
\label{eq:lowerbound}
\end{equation}
Consequently:
\begin{enumerate}
    \item if \(q\tau k_\star\ge1\), no graph in the reliable class is strictly subcritical;
    \item if \(q\tau k_\star<1\) and a connected simple \(k_\star\)-regular graph exists, that graph is reliable, strictly subcritical, and minimizes \(\Rerr\) over the reliable graph class.
\end{enumerate}
\end{theorem}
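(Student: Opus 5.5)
The argument reduces everything to a spectral lower bound on the adjacency matrix together with the reliability characterization from \cref{sec:reliability}. First, I will identify the reliable class exactly. By the Monotonicity lemma, $Q(d;p)=B_{d+1}(p)$ is non-decreasing in $d$. Hence the set in \eqref{eq:kstar} is an up-set in $\{0,\ldots,N-1\}$, and a node of degree $d$ meets the target iff $d\ge k_\star$. So a graph $G$ is in the reliable class iff its minimum degree satisfies $\delta(G)\ge k_\star$. In the undirected simple setting, $A$ is symmetric and $0/1$, and by \eqref{eq:peredge} we have $\Rerr(G)=q\tau\rho(A)$.

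The key step is the bound $\rho(A)\ge\delta(G)$. I will prove it with the Rayleigh quotient. Since $A$ is real symmetric, $\rho(A)\ge\lambda_{\max}(A)\ge \mathbf 1^\top A\mathbf 1/\mathbf 1^\top\mathbf 1 = 2|E|/N=\bar d$, the average degree. Then $\bar d\ge\delta(G)\ge k_\star$. Alternatively, Perron--Frobenius together with row sums each at least $\delta$ gives the same conclusion. Multiplying by $q\tau\ge0$ yields \eqref{eq:lowerbound}. Connectedness is not actually needed for this inequality. It only guarantees irreducibility, which is what links $\Rerr$ to the invasion criterion of \cref{thm:ngm}.

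The two consequences then follow directly. For item 1, if $q\tau k_\star\ge1$, every reliable $G$ has $\Rerr(G)\ge1$, so none is strictly subcritical. For item 2, let $G$ be connected, simple and $k_\star$-regular. Then $\delta(G)=k_\star$, so $G$ is reliable. Its row sums all equal $k_\star$, so $\mathbf 1$ is a nonnegative eigenvector with eigenvalue $k_\star$, and Perron--Frobenius gives $\rho(A)=k_\star$. Hence $\Rerr(G)=q\tau k_\star<1$, so $G$ is strictly subcritical. Because this value equals the lower bound \eqref{eq:lowerbound}, it is also minimal over the reliable class.

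No step is a real obstacle. The point needing the most care is the equivalence between the per-node target and $\delta(G)\ge k_\star$. That equivalence relies on monotonicity of $Q$ in $d$, so I will cite the lemma explicitly there rather than treat it as definitional. A second small point is that $\rho(A)$ coincides with the largest eigenvalue for nonnegative symmetric $A$, which justifies the Rayleigh-quotient step.
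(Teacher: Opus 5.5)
Your proposal is correct and matches the paper's proof: the Rayleigh quotient at the all-ones vector gives $\rho(A)\ge\bar d\ge\delta(G)\ge k_\star$, and a connected $k_\star$-regular graph has $\rho(A)=k_\star$, so it attains the bound. Your explicit use of the Monotonicity lemma to show that the reliable class is exactly $\{G:\delta(G)\ge k_\star\}$ is a welcome addition, because the paper only asserts that equivalence in the text before the theorem.
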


\begin{proof}
For an undirected graph, the Rayleigh quotient at the all-ones vector gives
\[
\rho(A)\ge \bar d\ge\delta(G)\ge k_\star.
\]
Substitution into \eqref{eq:peredge} proves \eqref{eq:lowerbound}. A \(k_\star\)-regular graph has \(\rho(A)=k_\star\), attains the lower bound, and is strictly subcritical under the second condition.
\end{proof}

In plain terms, accuracy requires enough links for agents to compare independent answers, while error control requires few enough effective links to keep a mistake from multiplying; a usable network exists only when both requirements can be met. This interpretation retains the graph-class, homogeneity, and fixed-per-edge assumptions stated in the theorem.

The constructive part is conditional on the usual degree and parity requirements for a connected regular graph. When \(q\tau>0\), the feasible degrees in a regular density sweep satisfy
\begin{equation}
k_\star\le k<\frac{1}{q\tau}.
\label{eq:window}
\end{equation}
This interval is the precise intermediate-connectivity region in the homogeneous benchmark. If \(q\tau=0\), the propagation constraint is vacuous and every feasible regular degree \(k\ge k_\star\) is subcritical.

For illustration, set \(N=32\), \(p=0.65\), \(Q_{\min}=0.90\), and \(\tau=0.12\). Equation \eqref{eq:kstar} gives \(k_\star=16\), and the critical susceptibility at the lower bound is \(q_{\mathrm{crit}}=1/(0.12\times16)=0.5208\). With \(q=1/6\), every simple 32-node graph is subcritical and reliability selects the graph. With \(q=1/3\), reliable and subcritical graphs exist, but they require \(\rho(A)<25\). With \(q=2/3\), every reliable graph is supercritical under the model. Figure~\ref{fig:frontier} assembles the calculation: panel a gives the clean-reliability curve, panel b shows how the per-edge invasion factor rises with regular degree, panel c displays the intersection of the two constraints, and panel d previews the fixed-sender contrast developed in \cref{sec:budget}. These values are illustrative inputs.

\begin{figure}[t]
    \centering
    \includegraphics[width=\linewidth]{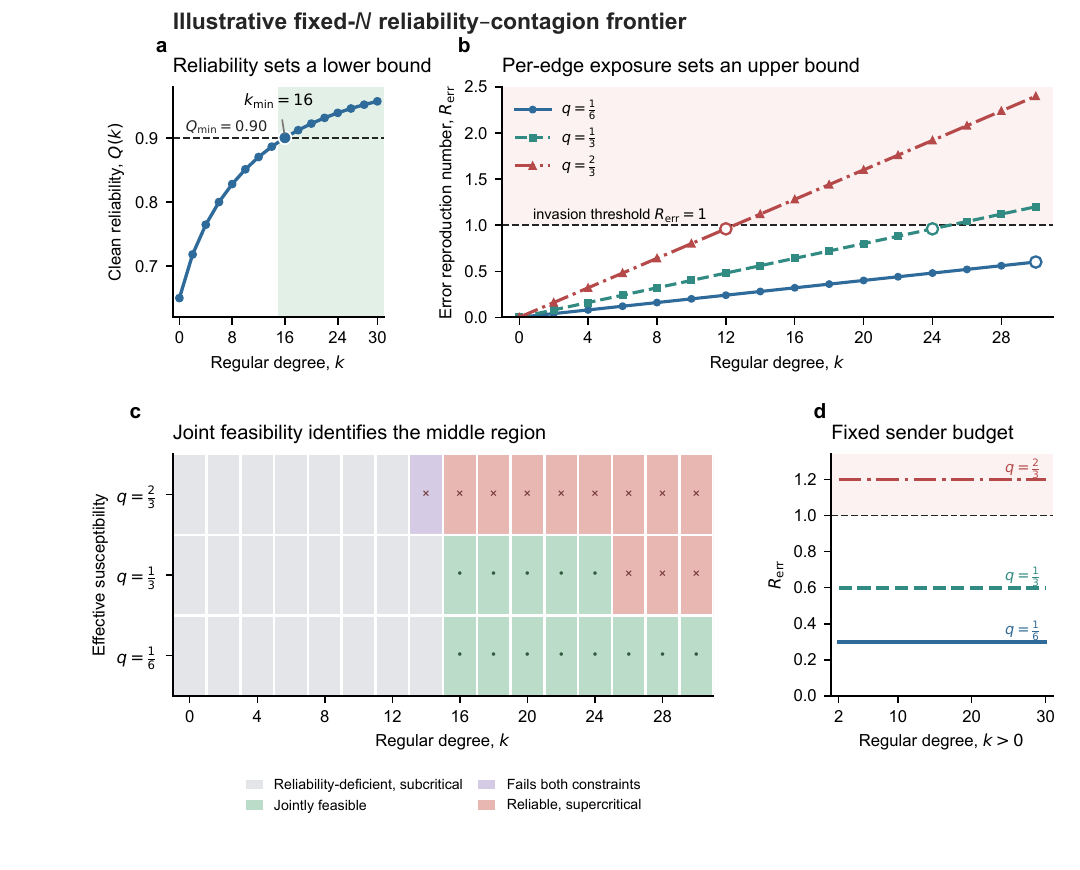}
    \caption{Illustrative fixed-\(N\) reliability--contagion calculation. (a) The independent-signal benchmark reaches \(Q_{\min}=0.90\) at \(k_\star=16\). (b) Under fixed per-edge lifetime exposure, \(\Rerr=q\tau k\) grows with degree. Open symbols mark the largest tested degree satisfying \(\Rerr<1\) for each susceptibility level. (c) Intersecting the analytic reliability requirement and strict subcriticality gives a feasible degree range for two susceptibility levels and no feasible degree for the highest level. (d) Under a fixed sender budget, the homogeneous spectral threshold is constant across positive regular degrees. The three effective-susceptibility values \(q\in\{1/6,1/3,2/3\}\) are illustrative inputs.}
    \label{fig:frontier}
\end{figure}
\FloatBarrier

\section{Fixed sender budget}
\label{sec:budget}

Let each infectious sender have total lifetime exposure budget \(b\), divided across its outgoing edges. With column \(j\) denoting sender \(j\), let
\[
D_{\mathrm{out}}=\diag(d_j^{\mathrm{out}}),\qquad
L=bAD_{\mathrm{out}}^{-1}.
\]

\begin{theorem}[Sender-normalized spectral threshold]
\label{thm:sender}
If \(b\ge0\), every node has positive out-degree, and susceptibility is homogeneous, then
\begin{equation}
\Rerr(G)=qb
\label{eq:sender}
\end{equation}
for every directed topology, including reducible and irregular graphs.
\end{theorem}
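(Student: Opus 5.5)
With homogeneous susceptibility the next-generation matrix from Theorem~\ref{thm:ngm} becomes \(K=qL=qbAD_{\mathrm{out}}^{-1}\). I will prove the claim by showing directly that \(\rho(AD_{\mathrm{out}}^{-1})=1\) for every directed graph in which each node has positive out-degree. Scaling by \(qb\ge0\) then gives \(\Rerr=qb\), and the boundary case \(qb=0\) is trivial.

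\textbf{Column sums.} Under the convention \(A_{ij}=1\) when \(j\) sends to \(i\), the \(j\)th column sum of \(A\) equals \(d_j^{\mathrm{out}}\). The \(j\)th column of \(M=AD_{\mathrm{out}}^{-1}\) is column \(j\) of \(A\) divided by \(d_j^{\mathrm{out}}\), which is well defined by the positive out-degree hypothesis. Hence
\[
\mathbf 1^\top M=\mathbf 1^\top ,
\]
so \(M\) is nonnegative and column-stochastic.

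\textbf{Upper bound.} The induced \(\ell_1\) operator norm of a matrix is its maximum absolute column sum, so \(\|M\|_1=1\). Since the spectral radius never exceeds any induced norm, \(\rho(M)\le 1\).

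\textbf{Lower bound.} Transposition preserves eigenvalues, and \(M^\top\mathbf 1=\mathbf 1\), so \(1\) is an eigenvalue of \(M\) and \(\rho(M)\ge1\). This step uses neither Perron--Frobenius nor irreducibility, which is why the statement holds for reducible and irregular topologies alike.

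Combining the two bounds gives \(\rho(M)=1\), and therefore \(\Rerr=\rho(qbM)=qb\). The argument is short, and the only step needing care is the indexing convention. It is the out-degree normalization of the sender columns, not the row sums, that yields stochasticity; with the opposite convention the same argument would go through using row sums and the \(\ell_\infty\) norm. The hypothesis on out-degree also deserves a remark. If some node had out-degree zero, its column would be zero and \(M\) would only be substochastic. The eigenvalue \(1\) could then be lost, and the spectral radius could drop below \(1\), for example in a directed acyclic graph.
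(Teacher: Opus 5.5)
Your proposal is correct and matches the paper's proof: column-stochasticity of $M=AD_{\mathrm{out}}^{-1}$ gives the eigenvalue $1$ (via $\mathbf 1^\top M=\mathbf 1^\top$) and induced one-norm $1$, so $\rho(M)=1$ and $\Rerr=\rho(qbM)=qb$. You add some detail the paper leaves implicit, namely the transposition step, why irreducibility is not needed, and the role of the positive out-degree hypothesis, and all of it is accurate.
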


\begin{proof}
The matrix \(M=AD_{\mathrm{out}}^{-1}\) is column stochastic. Thus \(1\) is an eigenvalue and its induced one-norm is one, giving \(\rho(M)=1\). Therefore \(\rho(qbM)=qb\).
\end{proof}

In plain terms, if every sender splits the same total communication allowance among its recipients, adding recipients weakens each individual link, so network density alone does not change how easily a newly introduced error starts to multiply.

This result is the direct matrix analogue of the degree-normalized contact-process limit \citep{gomez2010contact}. Figure~\ref{fig:frontier}d shows the resulting degree-invariant line for the homogeneous example. The theorem concerns the linear spectral threshold. Seed reachability, finite cascade size, extinction time, and transient amplification can remain topology-dependent. A density effect in the linear spectral invasion factor under a fixed sender budget points to heterogeneity or another departure from the homogeneous normalized model; topology-dependent finite-cascade statistics do not by themselves contradict the theorem.

\section{Finite-network stochastic illustration}
\label{sec:numerical}

\subsection{Design}

We implemented the finite-state continuous-time SEICS process with an exact Gillespie algorithm on \(N=32\) circulant regular graphs. Degrees were \(k\in\{8,12,16,20,24,28,30\}\), susceptibilities were \(q\in\{1/6,1/3,2/3\}\), and communication used either lifetime exposure \(\tau=0.12\) per edge or lifetime sender budget \(b=1.8\). The rates were \(\gamma=1\) and \(\omega=0.1\); \(\sigma=q\) and \(\nu=1-q\), so \(\sigma/(\sigma+\nu)=q\). Each of the 42 cells contained 500 independently derived random-number streams, for 21,000 trajectories. A cascade was recorded when at least seven distinct nodes entered I. Wilson intervals summarize this binary endpoint.

The clean reliability curve was evaluated separately from \eqref{eq:reliability}. The Gillespie process does not generate the clean signals or their majority vote. The experiment therefore combines an analytic reliability constraint with a stochastic contagion illustration.

\subsection{Results}

Under fixed per-edge exposure, the analytic invasion factor increased with degree and outbreak probability increased overall. At \(q=1/3\), the reliability-qualified cells moved from \(\Rerr=0.64\) and outbreak probability 0.046 at \(k=16\) to \(\Rerr=1.20\) and probability 0.196 at \(k=30\). At \(q=2/3\), every reliability-qualified degree was analytically supercritical; outbreak probability rose from 0.266 at \(k=16\) to 0.584 at \(k=30\).

Under the fixed sender budget, the analytic factors were 0.3, 0.6, and 1.2 at every tested degree, as required by \cref{thm:sender}. Finite outbreak probabilities still varied with degree. For \(q=2/3\), they ranged from 0.158 to 0.282. This difference illustrates the gap between a linear threshold and a complete finite-cascade statistic. The maximum probability that a run still contained an E or I node at the horizon was 0.01; extending the horizon could therefore revise the reported ever-I outbreak probability by at most one percentage point in any cell.

Figure~\ref{fig:gillespie}a shows the degree trend under fixed per-edge exposure, including the separately computed reliability-qualified region. Figure~\ref{fig:gillespie}b shows the sender-budget case, where the analytic factor is degree-invariant but finite cascade probabilities are not. Figure~\ref{fig:gillespie}c places both regimes against the analytic invasion factor and marks the deterministic mean-field boundary.

\begin{figure}[t]
    \centering
    \includegraphics[width=\linewidth]{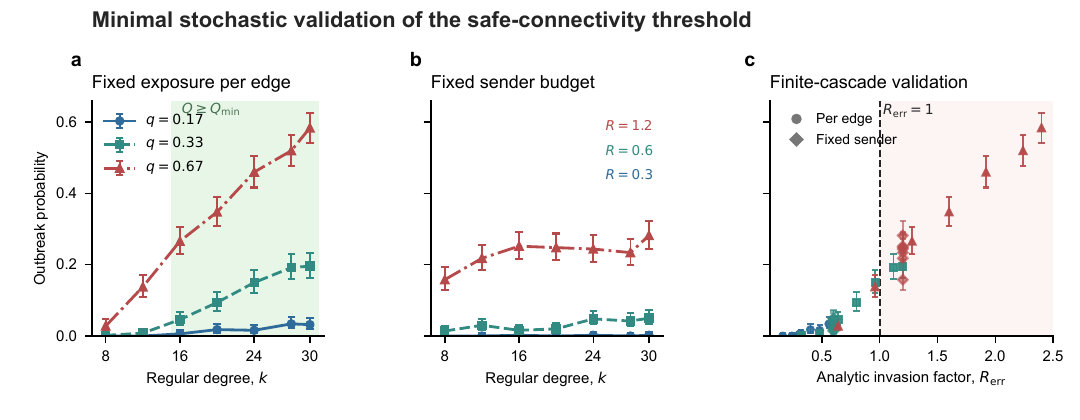}
    \caption{Finite-network stochastic illustration. Points show outbreak probabilities from 500 Gillespie trajectories per cell; error bars are two-sided 95\% Wilson intervals. (a) With fixed exposure per edge, outbreak probability increases overall with degree, particularly when the analytic factor crosses one. The green region marks degrees that meet the separate analytic reliability requirement. (b) With a fixed sender budget, the spectral factor remains constant within each susceptibility level while finite cascade probabilities retain degree dependence. (c) Outbreak probability plotted against the analytic invasion factor. The dashed line marks the deterministic mean-field boundary, rather than an exact finite-network transition.}
    \label{fig:gillespie}
\end{figure}
\FloatBarrier

\section{Controlled real-model network experiment}
\label{sec:real}

\subsection{Design}

We conducted a controlled real-model experiment to check whether added edges increase error spread under a frozen communication protocol. Six stateless instances of \texttt{grok-4.3} occupy three nested graphs: a cycle \(G_2\), a degree-four circulant graph \(G_4\), and the complete graph \(G_5\). The experiment uses 36 new fictional closed-world binary tasks, with 18 correct A answers and 18 correct B answers. None of the four pilot tasks used in protocol development was retained.

Node private-source reliabilities use the fixed multiset \((0.74,0.78,0.80,0.82,0.86,0.90)\), while each erroneous peer report has displayed reliability 0.80. The 36 tasks form a complete six-by-six crossing of seed node and cyclic reliability rotation. The six topology execution orders follow a Latin-square schedule, so each order occurs six times. The controller injects one exogenous false seed in every trial.

Propagation uses synchronous frozen generations. Each node is queried once, at first exposure, with all newly infectious neighbours listed separately. A false answer enters I, a correct answer enters C, and \texttt{UNSURE} is the observed E endpoint. Only I nodes transmit in the next generation. The primary endpoint is first-generation seed offspring \(Z_1\), measured for all 36 tasks under all three topologies. In plain terms, it counts how many immediate neighbours accept the seeded false claim. A pre-registered 12-task subset continues to full cascades; this subset contains each seed node, reliability rotation, and topology execution order twice. Its secondary endpoint is the fraction \(Y\) of the five non-seed nodes that ever enter I, which records how far the false claim eventually reaches.

The remaining 24 tasks stop after the first generation by design and are not treated as complete cascades. The protocol permits at most 444 logical decisions and 520 HTTP attempts, including retries. All comparisons are paired within task and reported descriptively.

\subsection{Results}

All 108 task-topology trial units completed. The realized cascades required 420 logical decisions and 422 HTTP attempts, including two successful retries and no failed decision record. All 420 response identifiers were present and unique. Recorded usage was 221,120 input tokens and 205,822 output tokens, for 426,942 tokens in total. The frozen-manifest, append-only-ledger, balance, cascade-scope, and budget checks passed.

The model followed the displayed source-reliability rule in all 420 decisions. It selected the erroneous peer report when reliability 0.80 exceeded private reliability 0.74 or 0.78, returned \texttt{UNSURE} at the 0.80 tie, and retained the correct private report at reliabilities 0.82, 0.86, and 0.90. This exact compliance confirms execution of the controlled mechanism; it does not estimate unconstrained hallucination susceptibility.

The topology summaries are shown in \cref{tab:real}. Mean seed offspring increased from 0.667 on \(G_2\) to 1.333 on \(G_4\) and 1.667 on \(G_5\). The corresponding mean fraction of exposed neighbours that entered I was 0.333 under every topology. The absolute increase therefore came from exposing more neighbours under the fixed per-edge protocol, rather than from a change in per-exposure adoption.

\begin{table}[t]
    \centering
    \small
    \caption{Controlled real-model outcomes. The primary endpoint \(Z_1\) uses all 36 tasks; the secondary non-seed ever-I fraction \(Y\) and infection depth use the pre-registered 12-task full-cascade subset. Values are arithmetic means over the indicated paired tasks.}
    \label{tab:real}
    \begin{tabular}{lrrrrrrr}
        \toprule
        Topology & \(k\) & \(n_{Z_1}\) & Mean \(Z_1\) & Adopted/exposed & \(n_Y\) & Mean \(Y\) & Mean depth \\
        \midrule
        Cycle & 2 & 36 & 0.667 & 0.333 & 12 & 0.200 & 1.000 \\
        Circulant & 4 & 36 & 1.333 & 0.333 & 12 & 0.333 & 1.333 \\
        Complete & 5 & 36 & 1.667 & 0.333 & 12 & 0.333 & 1.000 \\
        \bottomrule
    \end{tabular}
\end{table}

For the paired primary contrasts, \(G_4-G_2\) was positive for 24 tasks, zero for 12, and negative for none; \(G_5-G_4\) was positive for 12, zero for 24, and negative for none. The mean differences were 0.667 and 0.333, respectively. The direct \(G_5-G_2\) comparison had mean 1.000, with 24 positive, 12 zero, and no negative task-level differences.

In the 12-task full-cascade subset, mean non-seed ever-I fraction increased from 0.200 on \(G_2\) to 0.333 on \(G_4\) and remained 0.333 on \(G_5\). Four paired \(G_4-G_2\) values increased by 0.4 and eight were unchanged. All 12 \(G_5-G_4\) differences were zero. Cumulative reach therefore increased between the sparse and intermediate graphs and then saturated on the complete graph.

Figure~\ref{fig:real}a shows the nested edge-addition path. Panel b reports all 36 paired first-generation outcomes and their means; panel c counts positive, zero, and negative paired changes; panel d reports all 12 paired full-cascade outcomes. The points and lines are descriptive task-level values, not uncertainty intervals.

\begin{figure}[H]
    \centering
    \includegraphics[width=\linewidth]{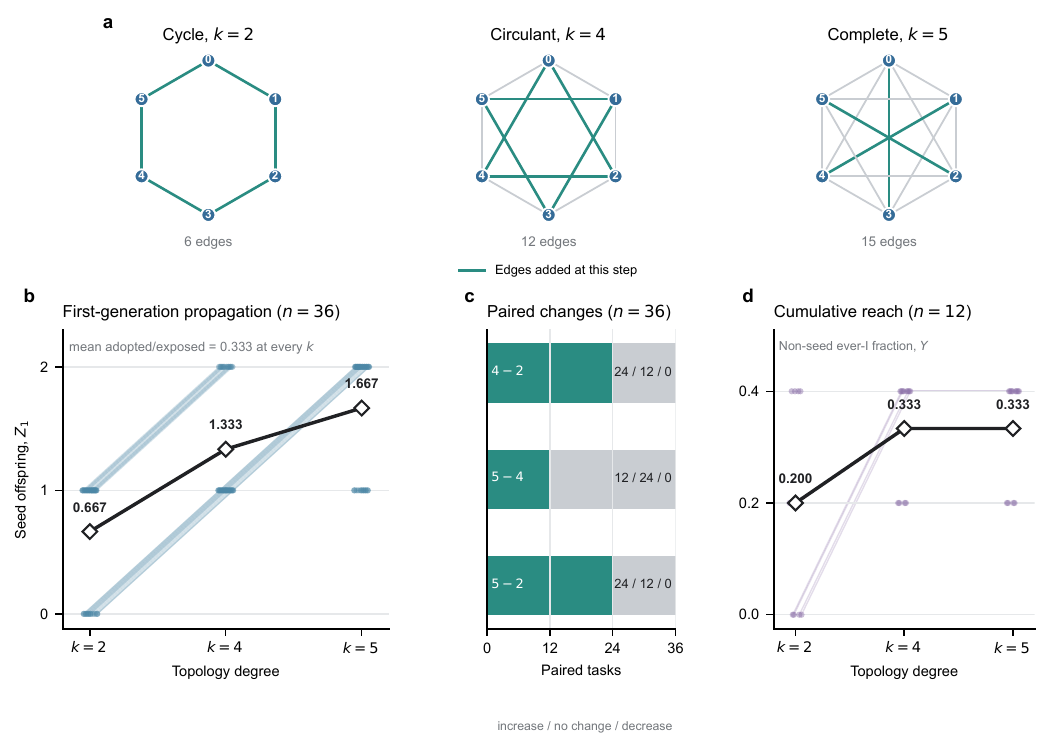}
    \caption{Controlled \texttt{grok-4.3} network experiment. (a) Nested edge-addition path on six nodes; highlighted edges are added at each step. (b) First-generation seed offspring for all \(n=36\) paired tasks. Thin lines connect the same task across degrees; black diamonds give arithmetic means. Mean adoption among exposed neighbours is 0.333 at every degree. (c) Counts of paired increases, no changes, and decreases for the degree \(4-2\), \(5-4\), and \(5-2\) contrasts. (d) Non-seed ever-I fraction for the pre-registered \(n=12\) full-cascade subset, with black diamonds showing arithmetic means. Raw markers use small horizontal offsets for visibility. No interval in this figure is an inferential confidence interval.}
    \label{fig:real}
\end{figure}
\FloatBarrier

\section{Discussion}

The feasibility theorem gives a direct answer to the sparsity question for fixed-size networks under stated assumptions. Connectivity below \(k_\star\) fails the clean majority target. Connectivity above the spectral boundary permits early error invasion under fixed exposure per edge. A nonempty interval between these constraints is the candidate operating region. Its location depends on signal accuracy, verification, correction, and lifetime exposure.

The communication budget changes the mechanism. A per-edge comparison increases total exposure when edges are added, so degree enters the invasion factor. A fixed sender budget removes this degree dependence from the homogeneous first-order threshold. Reporting topology without reporting message allocation leaves the causal interpretation incomplete.

The controlled model run gives a protocol-level mechanism check. Added edges increased first-generation offspring across the three graphs: 24 of 36 direct \(G_5-G_2\) task comparisons increased and none decreased. The unchanged one-third adoption fraction per exposure identifies the immediate mechanism as a larger exposed neighbourhood under fixed per-edge communication. Cumulative reach increased from the cycle to the degree-four graph in four of 12 full-cascade task blocks and then plateaued in every \(G_5-G_4\) comparison. The experiment therefore supports a propagation cost from additional exposure opportunities, while providing no evidence for a smooth density response or an interior empirical optimum.

Several limits define the present contribution. The clean-reliability curve assumes independent equal-quality signals and local majority aggregation. Real LLM outputs can be correlated and can exhibit conformity or shared training errors. The deterministic SEICS equations use a node-level closure, while finite agent networks have stochastic, path-dependent cascades. The real-model experiment uses 36 fictional tasks, one model endpoint, controlled source reliabilities, one-shot decisions, one nested graph family, and only 12 pre-registered full cascades. Exact compliance with the displayed rule makes it a controlled instruction-following test rather than an estimate of spontaneous hallucination contagion. The topology sequence changes degree, path length, clustering, and exposure multiplicity together. A broader study should estimate heterogeneous pairwise operators on naturally occurring held-out claims, compare graph families at matched edge count, and measure clean reliability and corrupted-message propagation within the same agent workflow.

Correction provides a direct extension. In \eqref{eq:ngm}, prophylactic verification reduces \(s_i^0\), pre-transmission rejection reduces the probability of reaching I, and reactive correction shortens infectious duration through \(\gamma_i\). These factors enter different sides of \(K\), which permits node-specific intervention design. Optimizing their allocation under a fixed verification budget is left for subsequent work.

\section{Conclusion}

We formulated a correction-aware network model for one traceable error in an LLM multi-agent system and coupled its spectral invasion condition to a clean-reliability requirement. The resulting graph-class theorem identifies when reliable and strictly subcritical communication graphs exist, when they are impossible, and when an intermediate connectivity range is available. A sender-normalized analysis shows that this range depends on how communication exposure scales with degree. Finite stochastic simulations support the directional mechanism. In the controlled real-model experiment, added edges increased first-generation offspring across 36 paired tasks, while cumulative erroneous adoption increased from degree 2 to degree 4 and then plateaued in the 12-task full-cascade subset. The constant adoption fraction per exposed neighbour ties the immediate increase to the number of exposure opportunities under the frozen protocol. The theory provides the separate reliability constraint needed to define an intermediate region and keeps its assumptions explicit.

\section*{Reproducibility statement}

The arXiv source archive contains the theorem calculations, unit tests, Gillespie simulator, plotting scripts, figure source tables, controlled xAI protocol, the 36 held-out tasks, and the audited v0.3 real-model summaries, manifest, response records, and attempt ledger. The numerical design is reproducible from the recorded base entropy and cell indices. The integer \texttt{seed} column in the trajectory table is an audit identifier and is not, by itself, a direct RNG reconstruction key.

\bibliographystyle{plainnat}
\bibliography{references}

\end{document}